\newtheorem{theorem}{Theorem}
\newtheorem{lemma}{Lemma}
\newtheorem{proposition}{Proposition}
\newtheorem{corollary}{Corollary}
\newtheorem{property}{Property}
\newtheorem{remark}{Remark}
\newtheorem{claim}{Claim}
\begin{document}
	\title{\huge Large-Scale Bandwidth and Power Optimization for Multi-Modal Edge Intelligence Autonomous Driving}

	\author{Xinrao Li, Tong Zhang, \textit{Member, IEEE}, Shuai Wang, \textit{Member, IEEE}, Guangxu Zhu, \textit{Member, IEEE}, \\ Rui Wang, \textit{Member, IEEE}, and Tsung-Hui Chang \textit{Fellow, IEEE},
		\vspace{-15pt}
		\thanks{
			
			Xinrao Li, Tong Zhang, and Rui Wang are with the Department of Electrical and Electronic Engineering, Southern University of Science and Technology, Shenzhen 518055, China (e-mail: \{11930632, zhangt7, wang.r\}@sustech.edu.cn).
			
			Shuai Wang is with the Shenzhen Institute of Advanced Technology, Chinese Academy of Sciences, Shenzhen 518055, China (e-mail: s.wang@siat.ac.cn).
			
			Guangxu Zhu is with Shenzhen Research Institute of Big Data, Shenzhen 518172, China (gxzhu@sribd.cn).
			
			Tsung-Hui Chang is with the School of Science and Engineering, The Chinese University of Hong Kong Shenzhen and Shenzhen Research Institute of Big Data, 518172, China (e-mail: tsunghui.chang@ieee.org).
			
		}
	}

	\maketitle

	\begin{abstract}
		Edge intelligence autonomous driving (EIAD) offers computing resources in autonomous vehicles for training deep neural networks. However, wireless channels between the edge server and the autonomous vehicles are time-varying due to the high-mobility of vehicles.	Moreover, the required number of training samples for different data modalities, e.g., images, point-clouds, is diverse. Consequently, when collecting these datasets from vehicles to the edge server, the associated bandwidth and power allocation across all data frames is a large-scale multi-modal optimization problem.
		This article proposes a highly computationally efficient algorithm that directly maximizes the quality of training (QoT).
		The key ingredients include a data-driven model for quantifying the priority of data modality and two first-order methods termed accelerated gradient projection and dual decomposition for low-complexity resource allocation.
		Finally, high-fidelity simulations in Car Learning to Act (CARLA) show that the proposed algorithm reduces the perception error by $3\%$ and the computation time by $98\%$.
	\end{abstract}

	\begin{IEEEkeywords}
		Autonomous driving, edge intelligence, large-scale optimization
	\end{IEEEkeywords}
	
	\vspace{-9pt}
	\IEEEpeerreviewmaketitle
	\section{Introduction}
	
	Edge intelligence autonomous driving (EIAD) is a promising paradigm to ease the conflict between the resource-hungry model training and the resource-limited vehicle platforms \cite{zhangjun}.
	Compared to cloud-assisted approaches, EIAD achieves better privacy protection and lower latency by providing computing resources in close proximity to autonomous vehicles \cite{zhangjun}.
	Among others, model training is the most fundamental task in EIAD systems, which consists of dataset generation, transmission, calibration, annotation, and processing.
	However, since EIAD systems need to train an ensemble of deep neural networks (DNNs) for learning semantic, geometry, and motion representations, the datasets are multi-modal, and the trained DNNs are heterogeneous \cite{tits}. Therefore, the required number of training samples is diverse, disabling conventional throughput-oriented approaches.
	
	Another challenge of EIAD is the time-varying wireless channels between the edge server and the autonomous vehicles due to high mobility, which makes the coherent time (i.e., a unit of time block for resource allocation) very small \cite{44,48,49,51}.
	Hence given a common data volume of an AD dataset, e.g., 100 GB, the number of time blocks for transmission can be very large.
	Consequently, EIAD systems require a fast large-scale optimizer for wireless resource allocation, and conventional convex optimization methods, e.g., the interior point method, are no longer suitable.
	
	In this article, we would like to shed some light on the above important issues.
	Specifically, this article presents a new design objective, i.e, quality of training (QoT), for resource allocation, e.g., bandwidth and power allocation, in multi-modal EIAD systems.
	Specifically, the QoT is defined as the overall perception accuracy (or planning efficiency) of all trained DNNs.
	The QoT metric is monotonically increasing with the communication throughput, but their relationship is nonlinear.
	Hence, the QoT-oriented approach directly maximizing the QoT would  result in fundamentally different designs, compared with that from conventional throughput-oriented approaches.
	Furthermore, despite the QoT-oriented problem being nonlinear and non-smooth, we leverage the accelerated gradient projection (AGP) and dual decomposition methods to optimally solve it in a highly efficient way with low complexity.
	The designed AGP achieves the fastest convergence rate, and the designed dual decomposition yields semi-closed-form solutions.
	The superior performance of the proposed algorithm has been verified by high-fidelity Car Learning to Act (CARLA) simulator \cite{32}.
	
	Finally, we would like to emphasize that QoT-oriented scheduling was studied in \cite{10,47,15,43}, which quantifies the importance of data uploaded from different mobile users by fitting a parametric model to experimental data.
	Our work exploits a similar principle but focuses on domain-specific AD datasets and time-varying channels rather than general-purpose datasets and static channels.
	In addition, EIAD resource allocation was extensively investigated for emerging scenarios such as multicast, federated learning and space-air-ground networks \cite{48,44,49,51}.
	However, these methods ignore the multi-modality issue and thus fail in achieving high QoT under resource constraints.
	In contrast, our method can potentially improve their performance by integrating the QoT-oriented scheduling into the EIAD resource allocation.
	
	\section{System Model and Problem Formulation}
	
	\begin{figure}[!t]
		\centering
		\subfigure[]{
			\label{Fig1a} 
			\includegraphics[height=2in]{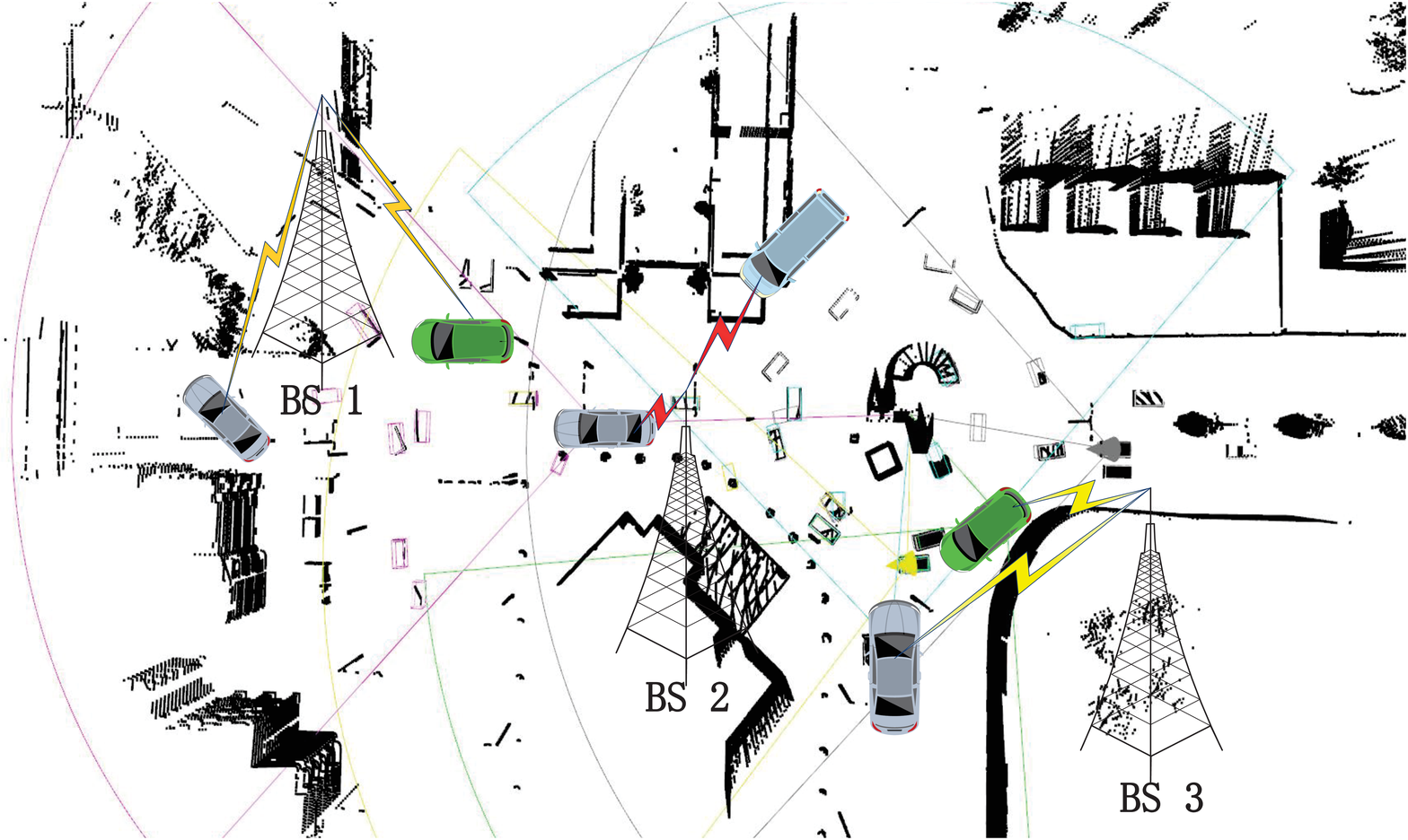}}
		\subfigure[]{
			\label{Fig1b} 
			\includegraphics[height=2in]{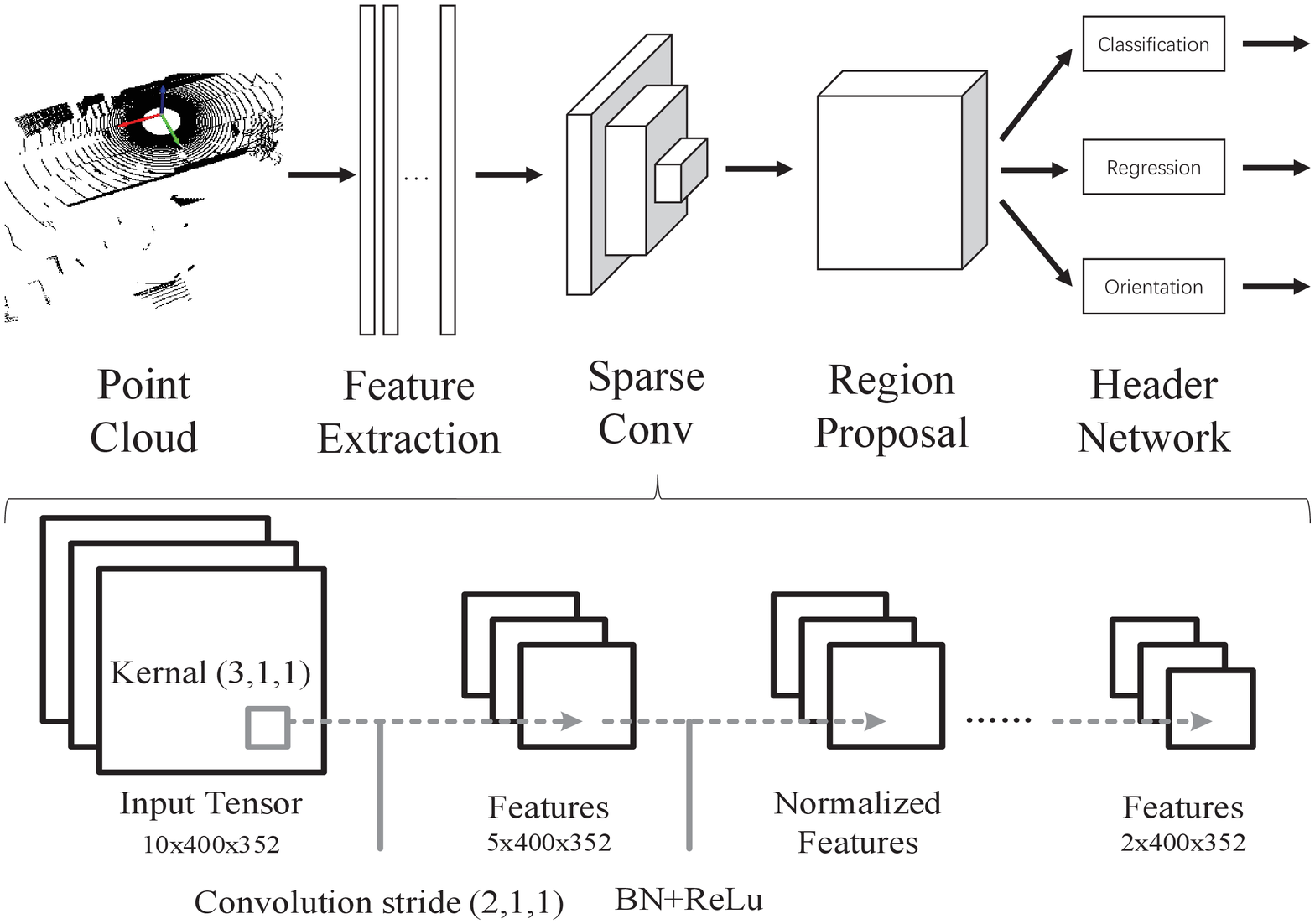}}
		\caption{(a) Illustration of the EIAD system, where the sectors represent different sensing regions of CAVs and the boxes represent the objects detected by the deep learning networks at CAVs;  (b) The structure of SECOND and the diagram of convolution layers with batch normalization and Relu. 
More details can be found at: https://github.com/open-mmlab/OpenPCDet.} 
		\vspace{-7pt}
	\end{figure}

	As shown in Fig.~1(a), we consider the scheduling of data uploading from $K$ connected autonomous vehicles (CAVs) to $L$ edge servers for DNN model training in an EIAD system. In particular, the procedure includes the following 4 stages
\begin{itemize}
\item[1)] \textit{Sensing}: each CAV senses the surrounding environment and stores the sensing data locally; 

\item[2)] \textit{Communication}: the edge server collects datasets from CAVs via uplink transmission; 

\item[3)] \textit{Training}: the edge server annotates the data, trains DNNs with the labeled data, and releases the models to CAVs; 

\item[4)] \textit{Inference}: the performance of the trained DNNs depends on the number of uploaded samples, thus on the bandwidth and power allocated in the communication stage.
\end{itemize}
	The data modality at CAV $k$ is denoted by $M_k$ (with $M_k=1$ representing point-clouds and $M_k=2$ representing images) and the size of its data sample is $D_k$.
    Each point-cloud with $12800$\,Kbits is used to train a sparsely embedded convolutional detection (SECOND) network,   for \textbf{Task 1 (object detection)}.
	Each image with $5600$\, Kbits is used to train a convolution neural network (CNN) for \textbf{Task 2 (weather classification)}. The SECOND architecture is shown in Fig.~1b, and the CNN architecture has four layers with $32 \times 64 \times 28 \times 10$ units.
	Communication time is split into $N$ time slots, where the duration of each time slot is $T$.
	The channels are assumed to be quasi-static during each time slot, and vary in different time slots.
	All channel power gains are assumed to be predictable, since the CAV routes and the traffic map, i.e., mobility pattern,  are known in advance.
	Let $h_{l,k,n}$ denote the uplink channel power gain from CAV $k$ to BS $l$ at time slot $n$.
	The achievable rate between CAV $k$ and BS $l$ at time slot $n$ is
	\begin{align}\label{capacity}
		R_{l,k,n}(w_{l,k,n},q_{l,k,n})=w_{l,k,n}\mathrm{log}_2\left(1+\frac{h_{l,k,n}q_{l,k,n}}{N_0w_{l,k,n}}\right),
	\end{align}
	where $q_{l,k,n}$ and $w_{l,k,n}$ denote the transmit power and bandwidth of CAV $k$ in BS $l$ at time slot $n$, respectively, and $N_0$ denotes the additive white Gaussian noise (AWGN) power spectral density.
	Note that there is no inter-cell interference in the denominator of equation \eqref{capacity}, as adjacent BSs adopt different frequency bands for multiplexing while remote BSs causes random interference that can be included in AWGN. On the other hand, each CAV can be associated to only one BS at a certain time slot.
	Due to limited coverage of BSs and high mobility of CAVs, it is necessary to perform handover during the entire dataset collection procedure.
	To be specific, let $x_{l,k,n}\in\{0,1\}$ with $\sum_{l}x_{l,k,n}=1$ for all $(k,n)$ denote the association state between CAV $k$ to BS $l$ at time slot $n$, where $x_{l,k,n}=1$ represents connection and $x_{l,k,n}=0$ represents disconnection.
	Consequently, the number of samples uploaded for the $k$-th modality is given by
	\begin{align}
		v_{k}=\sum_{l=1}^{L} \sum_{n=1}^{N}{\frac{Tx_{l,k,n}R_{l,k,n}}{ND_k}}.
	\end{align}

	For EIAD perception, the QoT is defined as the perception accuracy or one minus the perception error \cite{zhangjun}.
	Generally, it is difficult to characterize the relationship between the perception error and the number of data samples analytically. Fortunately, based on the research results of \cite{10,50}, this relationship can be approximately characterized by
	$\Psi _{k} \approx a_{k}\, v_{k}^{-b_{k}}$, where $\Psi_{k}$ denotes error rate of the DNN at CAV $k$, and $a_{k}, b_{k}>0 $ are hyper-parameters representing task difficulty. It is further indicated by \cite{10,50} that $a_{k}, b_{k} $ can be obtained from the curve fitting of experimental data.
	Our goal is to optimize the association, bandwidth, and power of all time slots, denoted by
	$\mathcal{X}=\{x_{l,k,n}\}$, $\mathcal{W}=\{w_{l,k,n}\}$, and $\mathcal{Q}=\{q_{l,k,n}\}$, respectively, such that the average perception error is minimized.
	It can be  formulated as the following optimization problem.
	\begin{subequations}
		\label{P1}
		\begin{align}
			\mathcal{P}_0:\,\,\,
			\min_{\mathcal{X},\mathcal{W},\mathcal{Q}
			}~&\sum_{k=1}^K\frac{a_{k}}{K}\left(\sum_{l=1}^L\sum_{n=1}^{N}\frac{T x_{l,k,n}R_{l,k,n}}{ND_{k}}\right)^{-b_{k}}\nonumber  \\
			\textrm{s.t.} ~~ & \frac{1}{N} \sum_{l=1}^K \sum_{n=1}^{N}q_{l,k,n} \leq P_{k}, \quad \forall k, \label{C1} \\
			&\frac{1}{N}\sum_{l=1}^L\sum_{k=1}^{K}\sum_{n=1}^{N}q_{l,k,n} \leq P_{\text{total}}, \label{C2} \\
			&\sum_{k=1}^{K} w_{l,k,n} = B_{\text{total}}, \quad \forall l,n, \label{C3} \\
			&q_{l,k,n} \geq 0, \ w_{l,k,n} \geq 0, \quad \forall l,k,n, \label{C5} \\
			& \sum_{l=1}^Lx_{l,k,n}=1, \quad \forall k,n, \\
			& x_{l,k,n}\in\{0,1\}, \quad \forall l,k,n, \label{xxx},
		\end{align}
	\end{subequations}
	where (3a) is the time slot average individual power constraint for each CAV, (3b) is the time slot average total power constraint for all CAVs, and (3c) is the total bandwidth constraint. The challenges of solving $\mathcal{P}_0$ are two-fold: (1) the discontinuity of the CAV-BS association variables; (2) the curse of dimensionality brought by numerous time slots $N$.
	
	\vspace{-10pt}
	\section{Proposed First-Order Algorithm}
	
	\subsection{Optimal CAV-BS Association}
	
	In order to address challenge (1), we propose the following proposition.
	
	\begin{proposition}\label{prop:opt_association}
		The optimal $\{x_{l,k,n}^*\}$ to $\mathcal{P}_0$ satisfies $x_{l',k',n'}^*=1$ if $h_{l',k',n'}=\max\,\{h_{l,k,n}\}$ and $x_{l',k',n'}^*=0$ otherwise.
	\end{proposition}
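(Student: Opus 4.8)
The plan is to exploit the monotonicity of the objective in the per-modality sample counts $v_k$ and thereby reduce the combinatorial association decision to a per-$(k,n)$ channel comparison. Since $a_k,b_k>0$, each summand $a_k v_k^{-b_k}$ is strictly decreasing in $v_k$, so minimizing the objective of $\mathcal{P}_0$ is equivalent to making every $v_k$ as large as the remaining constraints allow. Writing $v_k=\sum_{l,n} T x_{l,k,n} R_{l,k,n}/(ND_k)$ and using $\sum_l x_{l,k,n}=1$, exactly one BS is active for each pair $(k,n)$, so the contribution of slot $n$ to $v_k$ is just $R_{l,k,n}$ at the chosen $l$. The first step is then the elementary observation that, for fixed transmit power and bandwidth, the rate in \eqref{capacity} is strictly increasing in the channel power gain $h_{l,k,n}$, since $\partial R_{l,k,n}/\partial h_{l,k,n}>0$.

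Next I would run an interchange argument on a putative optimal solution. Suppose some CAV $k$ is associated at slot $n$ with a BS $l^\star$ for which $h_{l^\star,k,n}$ is not maximal, and let $l'=\arg\max_l h_{l,k,n}$. I would build a competing feasible point that sets $x_{l',k,n}=1$, $x_{l^\star,k,n}=0$, and carries the power $q_{l^\star,k,n}$ over to the link $(l',k,n)$ unchanged. This is admissible because the budgets \eqref{C1} and \eqref{C2} see $q$ only through sums over $(l,n)$ and are therefore invariant under the swap; by the monotonicity in $h$ just noted, the rate on the new link is at least as large, so $v_k$ does not decrease. If every other $v_{k'}$ could be held fixed, optimality would force the max-gain association, proving the claim.

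The main obstacle is precisely the bandwidth coupling \eqref{C3}, which, unlike the power budgets, is imposed per BS: a CAV cannot carry its bandwidth across a handover, so freeing $w_{l^\star,k,n}$ at the old server and claiming bandwidth at the new one forces a redistribution among the CAVs remaining at $l^\star$ and $l'$. That redistribution relaxes the share of those at $l^\star$ but tightens the share of those at $l'$, so the net effect on the other $v_{k'}$ is not obviously favorable; this load-balancing tension has no analogue for the power constraints. The crux of the proof is to show the reallocation can always be arranged without lowering any $v_{k'}$. The route I would attempt is to first prove the statement for a bandwidth profile attached to each CAV rather than to each BS---so that the rate seen by CAV $k$ depends on the server only through $h_{l,k,n}$, making max-gain trivially optimal---and then to transfer this structure to the jointly optimal allocation of $\mathcal{P}_0$. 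Closing that transfer, rather than the monotonicity bookkeeping, is where the real difficulty lies.
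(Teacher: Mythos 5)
Your first two paragraphs coincide with the paper's own proof: the paper argues by exactly this exchange-and-contradiction step, re-associating a CAV from a non-maximal-gain BS to its max-gain BS ``while keeping other variables unchanged,'' plus the side remark that power and bandwidth on links with $x_{l,k,n}=0$ should be shifted to active links; and, as you note, carrying power across BSs is admissible because \eqref{C1}--\eqref{C2} only see sums over $(l,n)$. The difference is that the paper stops there, whereas you correctly refuse to: the per-BS equality constraint \eqref{C3} means the moved CAV must obtain bandwidth at its new BS, and if that BS already serves other CAVs, their bandwidth---hence their $v_{k'}$---must shrink. The paper's proof never confronts this; with all variables literally unchanged, the swapped CAV inherits whatever $w_{l'',k',n'}$ the old solution assigned to it at the new BS, which is zero whenever that solution (as any sensible optimum) gives the whole band to the BS's own associated CAVs, so the ``improved'' solution can in fact be worse.

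The deeper point is that the gap you flag cannot be closed, so your proposed repair (prove the claim for CAV-attached bandwidth, then transfer it to $\mathcal{P}_0$) will not succeed: the proposition is false whenever two CAVs contend for the same max-gain BS. Take $K=L=2$, $N=1$, $T=D_k=1$, $a_1=a_2=a$, $b_1=b_2=b$, $N_0 B_{\text{total}}=1$, $P_1=P_2=P$, $P_{\text{total}}=2P$, and gains (BS index first) $h_{1,1}=10$, $h_{2,1}=9$, $h_{1,2}=10.1$, $h_{2,2}=1$, so both CAVs' best BS is BS~$1$. Under max-gain association, concavity of $w\mapsto w\log_2(1+cP/w)$ gives $v_1+v_2\le\log_2(1+20.2P)$, so by convexity of $x\mapsto x^{-b}$ the objective is at least $a\left(\tfrac12\log_2(1+20.2P)\right)^{-b}$. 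Re-associating CAV~$1$ to BS~$2$ instead gives each CAV a full band, $v_1=\log_2(1+9P)$ and $v_2=\log_2(1+10.1P)$, with objective at most $a\left(\log_2(1+9P)\right)^{-b}$; this is strictly smaller whenever $(1+9P)^2>1+20.2P$, i.e., for every $P>2.2/81$. Hence for any $a,b>0$ the max-gain association is strictly suboptimal, for exactly the load-balancing reason you identified. So the honest conclusion of your analysis is not a missing lemma but a missing hypothesis: Proposition~\ref{prop:opt_association}, and with it the reduction of $\mathcal{P}_0$ to $\mathcal{P}_1$, holds only under an extra assumption such as ``at every slot no two CAVs share the same max-gain BS'' (plausible in the paper's $K=2$, $L=10$ simulations, but never stated), and the paper's own exchange argument is unsound at precisely the step you declined to take on faith.
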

	
	\textbf{Proposition 1} can be proved by contradiction.
	Specifically, assume that there exists some $x_{l',k',n'}^*=1$ for $h_{l',k',n'}\neq\max\,\{h_{l,k,n}\}$.
	Then, we can always construct another solution by setting $x_{l',k',n'}=0$ and $x_{ll',kk',nn'}=1$ with $h_{ll',kk',nn'}\neq\max\,\{h_{l,k,n}\}$ while keeping other variables unchanged.
	This would reduces the objective of $\mathcal{P}_0$, which contradicts the optimality of $\{x_{l,k,n}^*\}$.
	Similarly, we can show that the bandwidth $w_{l,k,n}$ and power $q_{l,k,n}$ should be zero if $x_{l,k,n}=0$; otherwise those resources can always be allocated to another link with $x_{l,k,n}=1$ such that the system performance is improved.
	Based on \textbf{Proposition 1}, we substitute $\{x_{l,k,n}=x_{l,k,n}^*\}$ into $\mathcal{P}_0$, which would not change the solution of $\mathcal{P}_0$.
	By setting $\{g_{k,n}=\sum_{l}x_{l,k,n}^*h_{l,k,n}\}$, $\mathcal{U}=\{u_{k,n}:u_{k,n}=\sum_{l}x_{l,k,n}^*w_{l,k,n}\}$, and $\mathcal{P}=\{p_{k,n}:p_{k,n}=\sum_{l}x_{l,k,n}^*q_{l,k,n}\}$,  PROBLEM  $\mathcal{P}_0$ is equivalently transformed into
	\begin{subequations}
		\label{P0}
		\begin{align}
			\mathcal{P}_1:\,\,\,
			\min_{\mathcal{U},\mathcal{P}
			}~&\sum_{k=1}^K\frac{a_{k}}{K}\left(\sum_{n=1}^{N}\frac{T u_{k,n}\mathrm{log}_2\left(1+\frac{g_{k,n}p_{k,n}}{N_0u_{k,n}}\right)}{ND_{k}}\right)^{-b_{k}}\nonumber  \\
			\textrm{s.t.} ~~ & \frac{1}{N} \sum_{n=1}^{N}p_{k,n} \leq P_{k}, \quad \forall k, \label{CC1} \\
			&\frac{1}{N}\sum_{k=1}^{K}\sum_{n=1}^{N}p_{k,n} \leq P_{\text{total}}, \label{CC2} \\
			&\sum_{k=1}^{K} u_{k,n} = B_{\text{total}}, \quad \forall n, \label{CC3} \\
			&p_{k,n} \geq 0, \ u_{k,n} \geq 0, \quad \forall k,n, \label{CC5}.
		\end{align}
	\end{subequations}
	PROBLEM $\mathcal{P}_1$ is convex due to the following reasons:
	\begin{itemize}
		\item All the constraints are linear;
		\item Function ${a_k}(\cdot)^{-b_k}$ is non-increasing and convex;
		\item Function $\frac{Tu_{k,n}}{ND_{k}}\log_{2} \left( 1+\frac{p_{k,n} g_{k,n}}{N_0u_{k,n}} \right)$ is the perspective transform of the logarithm function $\frac{T}{ND_{k}}\log_{2} \left( 1+\frac{p_{k,n} g_{k,n}}{N_0} \right)$ w.r.t $(u_{k,n},p_{k,n})$ and thus jointly concave in both variables;
		\item The objective function is convex by the composition rule.
	\end{itemize}
	
	However, due to challenge (2), if PROBLEM $\mathcal{P}_1$ is solved via the prevalent software CVX, the computation complexity is $\mathcal{O}((KN)^{3.5})$, which is extremely time-consuming for large $N$.
	To overcome the above limitation, we derive a low-complexity optimal solution of large-scale PROBLEM $\mathcal{P}_1$.
	Specifically, since the objective is convex and the constraints of  PROBLEM $\mathcal{P}_1$ are non-coupling, we can optimally solve  PROBLEM  $\mathcal{P}_1$ by alternatively optimizing (AO) bandwidth $\{u_{k,n}\}$ and power allocation $\{p_{k,n}\}$ \cite[Corollary 2]{52}.
	In these two sub-problems, we adopt the first-order AGP and dual decomposition to significantly reduce the computational complexity.
	
	\vspace{-12pt}
	\subsection{Optimal Bandwidth Allocation}
	
	When $\{p_{k,n}=p_{k,n}^\diamond\}$, where $q_{k,n}^\diamond$ denotes the given value of transmission power of the $k$-th CAV in the $n$-th time slot, PROBLEM $\mathcal{P}_1$ is converted to PROBLEM $\mathcal{P}_2$
	\begin{eqnarray}
		\mathcal{P}_2:		&& \!\!\!\!\!\!\!\!\!\! \label{P2}
		\min_{\{u_{l,k}\}}~ \sum_{k=1}^K\frac{a_{k}}{K}\left(\sum_{n=1}^{N}\frac{Tu_{k,n}}{ND_{k}}\log_{2} \left( 1+\frac{p_{k,n}^\diamond g_{k,n}}{N_0u_{k,n}} \right) \right)^{-b_{k}}  \nonumber  \\
		&& \!\!\!\!\!\!\!\!\!\! \,\,\,\textrm{s.t.} ~~~ \eqref{C3}~\text{and}~\eqref{C5}.  \nonumber
	\end{eqnarray}
	According to \cite{21}, the AGP method updates the bandwidth allocation iteratively by the following equation
	\begin{eqnarray}
		\mathbf{U}^{[i+1]} = \Pi_{\mathcal{S}}[\mathbf{Q}^{[i]}-\eta \nabla\Xi(\mathbf{Q}^{[i]})], \label{gd}
	\end{eqnarray}
	where $\mathbf{U}^{[i]}=[\mathbf{u}^{[i]}_1,...,\mathbf{u}^{[i]}_K]^T \in \mathbb{R}^{K\times N}$ is the aggregation of bandwidth allocation of all CAVs at the $i^\text{th}$ iteration. Moreover, $\Pi_{\mathcal{S}}(\cdot) $ is the projection of matrix on set   $\mathcal{S}=\{\mathbf{U}|\mathbf{U}=[\mathbf{u}_1,\ldots,\mathbf{u}_N] (\mathbf{u}_n\in R^{K\times 1}, i=1,\ldots,N); \mathbf{1}^T \mathbf{u}_{n}\leq B_\text{total} ; \mathbf{u}_{n} \succeq 0 \}$, which is elaborated in Appendix A;
	$\eta$ is the step size such that
	$\frac{1}{\eta}\,\mathbf{I} - \nabla^2\Xi_m(\mathbf{U})$ is positive semi-definite;
	$\mathbf{Q}^{[i]}$ is the acceleration point, which is a linear combination of $\mathbf{U}^{[i]}$ and $\mathbf{U}^{[i-1]}$; $\nabla\Xi(\cdot)$ is the gradient of the objective function in  PROBLEM $\mathcal{P}_2$, which is given in the top of next page;
	\begin{figure*}
		\begin{equation}
			[\nabla\Xi(\cdot)] = \frac{\partial \Xi(\mathbf{U})}{\partial u_{k,n}}
			= -\frac{a_{k}b_{k}T}{D_{k}N} \left[\sum_{n=1}^N \frac{u_{k,n}T\log_2(1+\frac{p_{k,n}^\diamond g_{k,n}}{ u_{k,n}N_0})}{D_kN} \right]^{-b_k-1}
			\left[\log_2\left(1+\frac{p_{k,n}^\diamond g_{k,n}}{u_{k,n}N_0}\right)-\frac{1}{\left(\frac{u_{k,n}N_0}{p_{k,n}^\diamond g_{k,n}}+1\right)\ln2} \right]. \nonumber \label{LongEQ}
		\end{equation}
		\hrule
	\end{figure*}
	Thus, we have
	\begin{align}\label{rho}
		\mathbf{Q}^{[i]}=\mathbf{U}^{[i]}+\frac{c^{[i-1]}-1}{c^{[i]}}\left(\mathbf{U}^{[i]}-\mathbf{U}^{[i-1]}\right),
	\end{align}
	where $c^{[i]}$ is a parameter to control the importance of $\mathbf{U}^{[i]}-\mathbf{U}^{[i-1]}$ and is given by
	\begin{align}\label{cm}
		c^{[0]}=1,\quad c^{[i]}=\frac{1}{2}\left(1+\sqrt{1+4\left(c^{[i-1]}\right)^2}\right).
	\end{align}

	\begin{remark} \textbf{Why Acceleration?} We compute the look ahead gradient at the accelerated point $\mathbf{Q}^{[n]}$, by adding some accelerations (the item with $\mathbf{U}^{[i]}-\mathbf{U}^{[i-1]}$ in \eqref{rho}). Nevertheless, to avoid
		over-acceleration, the sequence $c^{[n]}$, which represents how much
		we trust in the acceleration, must be carefully designed as \eqref{cm}. It was proved in \cite{21} that $\mathbf{U}^{[n+1]}$ computed using \eqref{gd}--\eqref{cm} is guaranteed to converge to the optimal solution of  PROBLEM $\mathcal{P}_2$ with an iteration complexity $\mathcal{O}(1/\sqrt{\epsilon})$.
		This iteration complexity achieves the complexity lower bound.
	\end{remark}
	
	\vspace{-10pt}
	\subsection{Optimal Power Allocation}
	
	When the bandwidth allocation vectors is $\{u_{l,k,n}=u_{l,k,n}^\diamond\}$,  where $u_{l,k,n}^\diamond$ denotes the given value of bandwidth of the $k$-th CAV in BS $l$ and the $n$-th time slot, PROBLEM $\mathcal{P}_1$ is converted to PROBLEM $\mathcal{P}_3$ by fixing bandwidth allocation 
	\begin{subequations}
		\label{P3}
		\begin{align}
			\mathcal{P}_3:		\min_{\{\mathbf{p}_{l,k}\}}~&  \sum_{k=1}^K\frac{a_{k}}{K}\left(\sum_{n=1}^{N}\frac{Tu^\diamond_{k,n}}{ND_{k}} \log_{2} \left( 1+\frac{g_{k,n}p_{k,n}}{N_0u_{k,n}^\diamond}\right)\right)^{-b_{k}}\nonumber  \\
			\textrm{s.t.} \ \ & \eqref{C1}, \eqref{C2}, \, \text{and}\,  \eqref{C5}.  \nonumber
		\end{align}
	\end{subequations}
	Applying dual decomposition to PROBLEM  $\mathcal{P}_3$ yields 
	\begin{eqnarray}
		&&\!\!\!\!\!\!\! \!\!\!\!\!\!\!  \mathcal{D}(\mathcal{P}_3): \max_{\lambda\geq 0}
		\min_{\{\mathbf{p}_k \in \mathcal{G}_k\}}~   \sum_{k=1}^K\frac{a_{k}}{K}\left(\sum_{n=1}^{N}\frac{Tu^\diamond_{k,n}}{ND_{k}} \log_{2} \left( 1 + \right. \right. \nonumber  \\
		&&\!\!\!\!\!\!\!  \!\!\!\!\!\!\!  \left. \left. \frac{g_{k,n}p_{k,n}}{N_0u_{k,n}^\diamond} \right)\right)^{-b_{k}} + \lambda \left(\frac{1}{N}  \sum_{k=1}^{K}\sum_{n=1}^{N}p_{k,n}-P_{\text{total}}\right),
	\end{eqnarray}
	where
	$\mathcal{G}_k=\left\{\mathbf{p}_k:
	\frac{1}{N} \sum_{n=1}^{N}p_{k,n} \leq P_{k}, \,\, p_{k,n} \geq 0\right\}.
	$
	The dual of PROBLEM $\mathcal{P}_3$ is a bilevel optimization problem, where the outer problem is an unconstrained nonsmooth maximization problem and the inner problem is a constrained but decomposable problem.
	In the outer problem, the dual variable $\lambda$ can be updated via the sub-gradient descent method as
	\begin{align}
		&\lambda^{[i+1]}=\lambda^{[i]}+\xi\left(\frac{1}{N} \sum_{k=1}^{K}\sum_{n=1}^{N}p_{k,n}^{[i]}-P_{\text{total}}\right), \label{lambda}
	\end{align}
	where $\lambda^{[i]}$ and $p_{k,n}^{[i]}$ are  dual variable $\lambda$ and power allocation $p_{k,n}$ of  the $i^\text{th}$ iteration, respectively, and $\xi$ is the step size.
	At the $i$-th iteration, the inner problem for fixed $\lambda^{[i]}$ can be equivalently decomposed into $K$ sub-problems, given  by
	\begin{align}
		\mathcal{P}_4^{[i]}(k): \min_{\mathbf{p}_k^{[i]}\in\mathcal{G}_k}&\frac{a_{k}}{K}\left(\sum_{n=1}^{N}\frac{Tu^\diamond_{k,n}}{ND_{k}} \log_{2} \left( 1+\frac{g_{k,n}p_{k,n}^{[i]}}{N_0u_{k,n}^\diamond} \right)\right)^{-b_{k}} \nonumber  \\
		&+\lambda^{[i]}\frac{1}{N}\sum_{n=1}^{N}p_{k,n}^{[i]}, \, \,k=1,\cdots,K.
		\label{pki}
	\end{align}
	Define  $t_k=\frac{1}{N}\sum_{n=1}^{N}p_{k,n}^{[i]}$ as a slack variable. PROBLEM $\mathcal{P}_4^{[i]}(k)$ can be equivalently written as
	\begin{align}
		\min_{\left\{
			\mathbf{p}_k^{[i]},
			t_k
			\right\}} &\frac{a_{k}}{K}\left(\sum_{n=1}^{N}\frac{Tu^\diamond_{k,n}}{ND_{k}}\log_{2} \left( 1+\frac{g_{k,n}p_{k,n}^{[i]}}{N_0u_{k,n}^\diamond} \right)\right)^{-b_{k}}+\lambda^{[i]}t_k\nonumber\\
		\mathrm{s.t.}~ & \eqref{C1}, \eqref{C5}, \nonumber \\
		& \frac{1}{N}  \sum_{n=1}^{N}p_{k,n}^{[i]}=t_k.  \label{AppA1}
	\end{align} We have the following proposition on the optimal solution of  Problem  \eqref{AppA1}.
	\begin{proposition}\label{prop:opt_power}
		Given $t_k$, the optimal $\mathbf p_k^*$ to Problem \eqref{AppA1} is
		\begin{align}\label{p*}
			{p_{k,n}^{[i],*}}(\mu)= \left[ \frac{Tu_{k,n}^\diamond}{\mu ND_k\ln2}-\frac{N_0 u_{k,n}^\diamond}{g_{k,n}} \right]^+, \forall k, n,
		\end{align}
		where $\mu>0$ satisfies $ \sum_{n=1}^N{p_{k,n}^{[i],*}}(\mu)=N\mathrm{min}\left(P_k,t_k\right)$.
	\end{proposition}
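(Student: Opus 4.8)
The plan is to exploit two reductions that collapse the subproblem \eqref{AppA1} into a textbook water-filling problem. First I would fix $t_k$ and note that the penalty term $\lambda^{[i]} t_k$ is then constant, so the inner minimization is over $\mathbf{p}_k^{[i]}$ alone. Writing $r_k(\mathbf{p}_k) = \sum_{n=1}^N \frac{Tu_{k,n}^\diamond}{ND_k}\log_2\left(1+\frac{g_{k,n}p_{k,n}}{N_0 u_{k,n}^\diamond}\right)$ for the achievable throughput, the objective is $\phi(r_k) = \frac{a_k}{K} r_k^{-b_k}$ with $a_k,b_k>0$, which is strictly decreasing on $r_k>0$. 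Hence $\min_{\mathbf{p}_k}\phi(r_k)$ is equivalent to $\max_{\mathbf{p}_k} r_k$, which strips away the outer composition entirely; the exponent $-b_k$ plays no role in the structure of the optimal allocation and only rescales the multiplier later.

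Second, I would simplify the power constraints. The admissible total power $S=\sum_{n}p_{k,n}$ must satisfy the equality $S=Nt_k$ from \eqref{AppA1} together with $S\le NP_k$ from \eqref{C1}. Because $r_k$ is strictly increasing in each $p_{k,n}$, and therefore in $S$, the maximizer saturates the tightest budget, so at optimality $\sum_n p_{k,n}^{[i],*}=N\min(P_k,t_k)$. This replaces the two coupled sum constraints with a single active sum-power constraint and reproduces the $\min(P_k,t_k)$ appearing in the claim.

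The reduced subproblem, namely maximizing the concave separable throughput $r_k$ subject to $\sum_n p_{k,n}=N\min(P_k,t_k)$ and $p_{k,n}\ge 0$, is convex, so the KKT conditions are necessary and sufficient. I would form the Lagrangian with a scalar multiplier $\mu$ on the sum-power constraint and multipliers $\nu_n\ge 0$ on the nonnegativity constraints. Stationarity equates $\partial r_k/\partial p_{k,n}$ to a common value across all active slots; evaluating $\partial r_k/\partial p_{k,n}=\frac{T u_{k,n}^\diamond}{N D_k \ln 2}\cdot\frac{g_{k,n}}{N_0 u_{k,n}^\diamond + g_{k,n}p_{k,n}}$ and solving for $p_{k,n}$ yields $\frac{Tu_{k,n}^\diamond}{\mu N D_k \ln 2}-\frac{N_0 u_{k,n}^\diamond}{g_{k,n}}$, while complementary slackness with $\nu_n$ clips the inactive slots to zero, producing the $[\cdot]^+$ water-filling form in \eqref{p*}. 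Finally, since $\sum_n p_{k,n}^{[i],*}(\mu)$ is continuous and strictly decreasing in $\mu$ over $(0,\infty)$, there is a unique $\mu>0$ meeting $\sum_n p_{k,n}^{[i],*}(\mu)=N\min(P_k,t_k)$, locatable by bisection.

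I expect the main obstacle to be the first reduction: carefully arguing that wrapping the throughput in the decreasing, convex map $(\cdot)^{-b_k}$ leaves the water-filling structure intact, that is, that the composite's gradient prefactor $-\frac{a_k b_k}{K}r_k^{-b_k-1}$ is a single positive scalar that cancels out of the equal-marginal condition, so that the optimal allocation is governed purely by throughput maximization under the power budget. Once this equivalence is established, the remaining KKT analysis is the standard water-filling argument.
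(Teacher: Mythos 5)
Your proposal is correct and follows essentially the same route as the paper's Appendix B: both use the strict monotonicity of $a_k(\cdot)^{-b_k}$ to reduce the subproblem to throughput maximization, both argue by monotonicity that the power constraints collapse to a single active budget $\sum_{n} p_{k,n} = N\min(P_k,t_k)$ (the paper phrases this as relaxing the equality to an inequality that is always tight), and both finish with the standard KKT water-filling derivation. Your added remark on the uniqueness of $\mu$ via monotonicity of $\sum_n p_{k,n}^{[i],*}(\mu)$ is a small, correct refinement the paper leaves implicit.
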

	\begin{proof}
		Please refer to Appendix B.
	\end{proof}
	
	\begin{remark} Proposition 1 indicates that the optimal $\mathbf p_k^*$ can be found via one-dimensional search over $t_k$.
		Since $t_k\in[0,P_{k}]$ and the objective function is  uni-modal w.r.t. $t_k$, the optimal $t_k^*$ in (13) can be found by bisection search within $[0,P_{k}]$.
		The iteration complexity of bisection is $\mathcal{O}(\mathrm{log}(1/\epsilon))$.
	\end{remark}
	
	\begin{algorithm}[b]
		\caption{Proposed First-Order Algorithm for Solving  $\mathcal{P}_1$}
		\begin{algorithmic}[1]
			\State Initialize $\eta = 10^4$ and $\xi = 10^{-3}$.
			\State \textbf{Repeat}:
			\State \quad Set $i=1$ and $c^{[0]} = 1$.
			\State \quad \textbf{Repeat}:
			\State \quad \quad Calculate $c^{[i]}$, $\textbf{Q}^{[i]}$, $\textbf{U}^{[i+1]}$ based on \eqref{gd}--\eqref{cm}.
			\State \quad \quad Update $i\leftarrow i+1$
			\State \quad \textbf{Until}: The stop criterion is satisfied.
			\State \quad Set $i=0$ and $\lambda^{[0]} = 0$.
			\State \quad \textbf{Repeat}:
			\State \quad \quad Calculate ${p_{k,n}^{[i],*}}$ based on \eqref{p*}.
			\State \quad \quad Update $\lambda^{[i+1]}$ based on \eqref{lambda}.
			\State \quad \quad Update $i\leftarrow i+1$
			\State \quad \textbf{Until}: The stop criterion is satisfied.
			\State \textbf{Until}: The stop criterion is satisfied.
		\end{algorithmic}
	\end{algorithm}

	\vspace{-6pt}
	\subsection{Complexity Analysis}
	
	The entire procedure of the proposed method is summarized in Algorithm 1.
	It can be seen that Algorithm 1 involves two levels of iterations.
	In the outer-level AO iteration, the AGP method is first adopted to solve $\mathcal{P}_2$, which executes \eqref{gd}--\eqref{cm} iteratively.
	The computation is dominated by equation \eqref{gd}, which requires a complexity of $\mathcal{O}(KN^2)$ (computing each element in $\nabla\Xi(\mathbf{Q}^{[i]})]$ needs a complexity of $\mathcal{O}(N)$ and there are $KN$ elements).
	Consequently, with $\mathcal{O}(1/\sqrt{\epsilon})$ iterations, the AGP method costs a complexity of $\mathcal{O}(KN^2/\sqrt{\epsilon})$.
	Then, to solve $\mathcal{P}_3$, dual decomposition is adopted, which executes \eqref{p*} for all $(k,n)$ and \eqref{lambda} iteratively.
	The computation cost of \eqref{p*} for all $(k,n)$ is given by $\mathcal{O}(KN)$.
	Therefore, with $\mathcal{O}(1/\epsilon)$ iterations for sub-gradient update and $\mathcal{O}(\mathrm{log}(1/\epsilon))$ iterations for bisection search, the dual decomposition method requires a computation complexity of $\mathcal{O}(\mathrm{log}(1/\epsilon)KN/\epsilon)$.
	In summary, the total complexity of Algorithm 1 is given by
	$\mathcal{O}(\mathrm{ITER}(KN^2/\sqrt{\epsilon}+\mathrm{log}(1/\epsilon)KN/\epsilon))$, where $\mathrm{ITER}$ is the number iterations for AO to converge.

	\section{Simulation Results}
	
	\begin{figure*}[t]
		\centering
		\subfigure[]{
			\label{Fig5} 
			\includegraphics[height=1.34in]{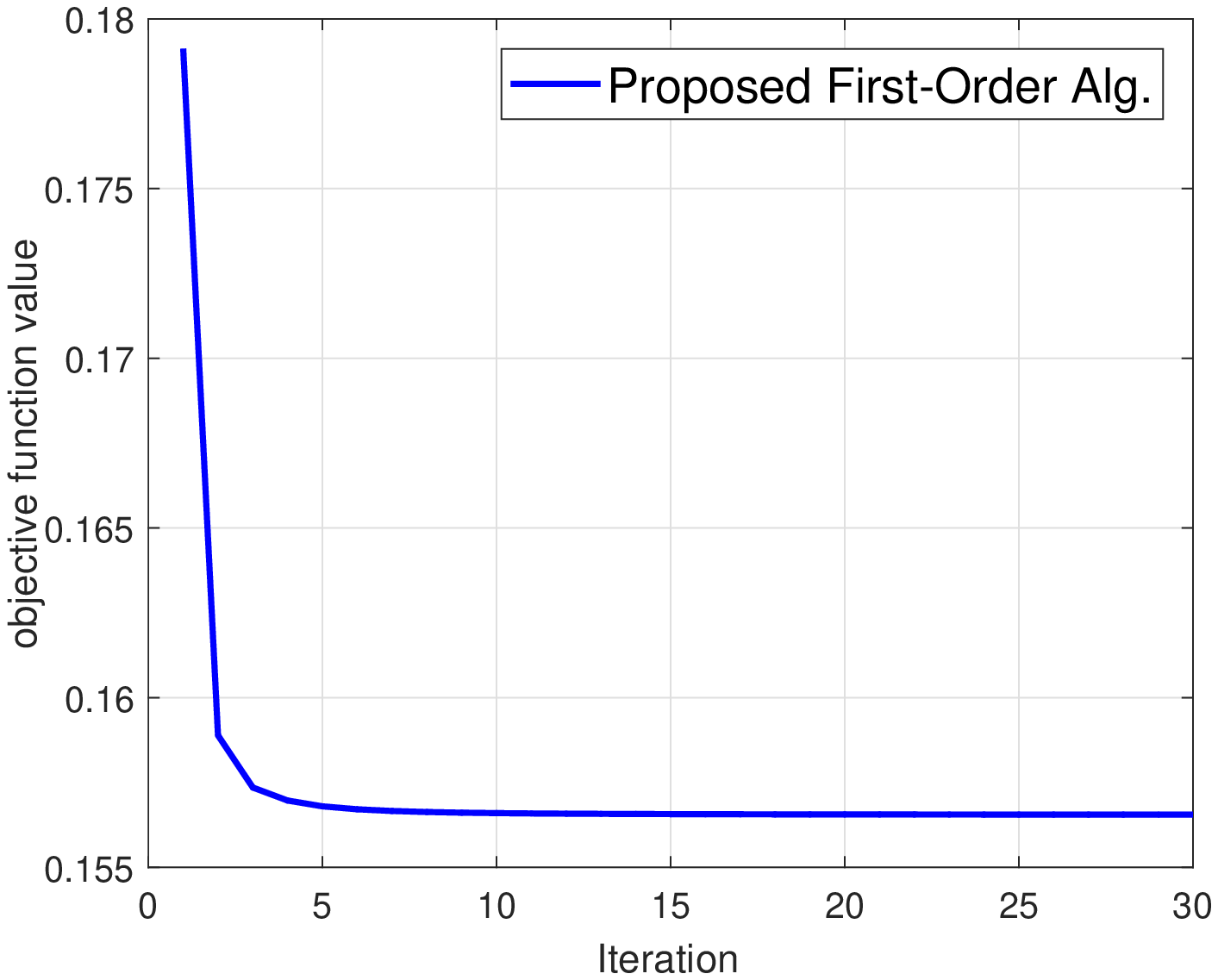}}
		\subfigure[]{
			\label{Fig6} 
			\includegraphics[height=1.34in]{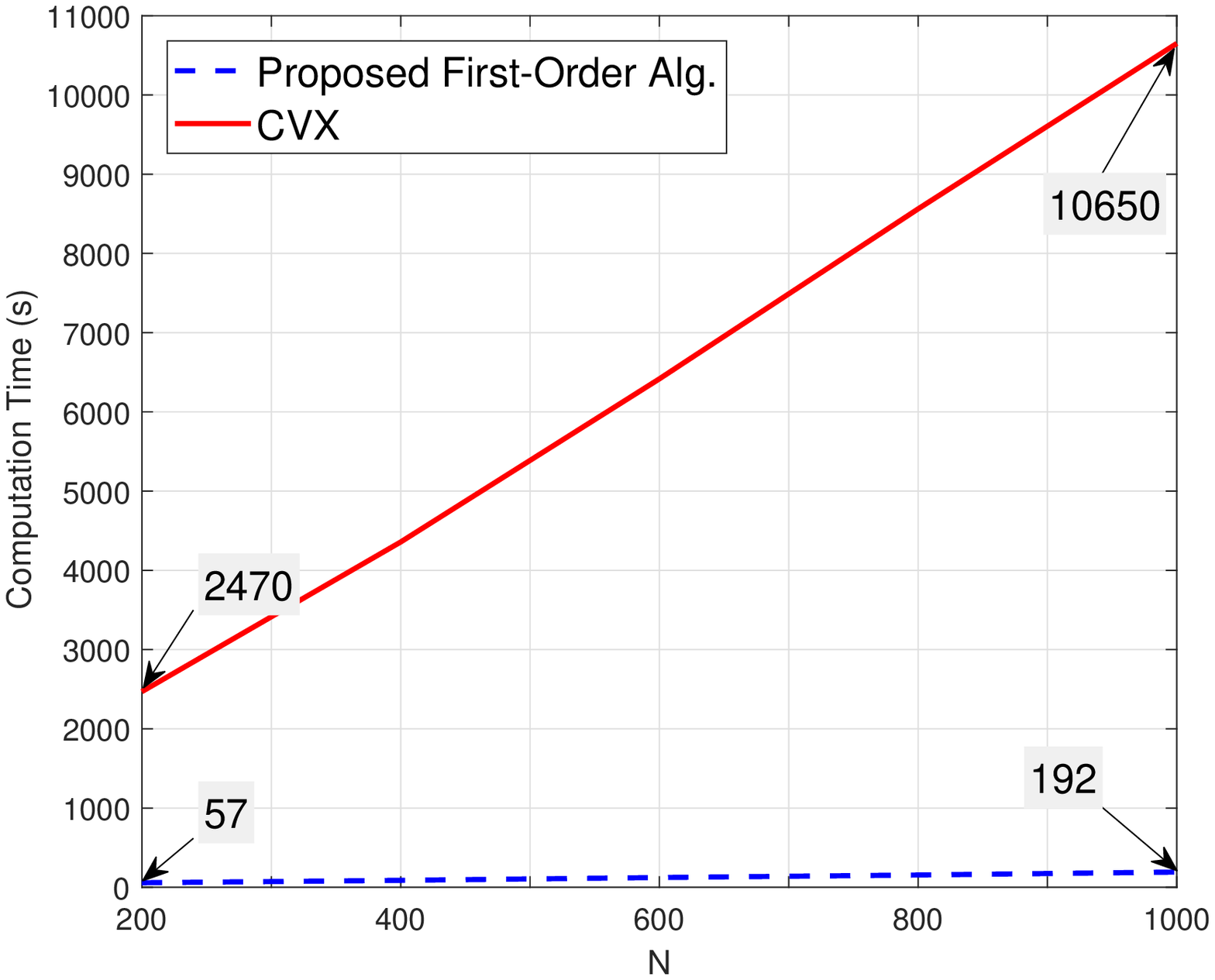}}
		\subfigure[]{
			\label{Fig7} 
			\includegraphics[height=1.34in]{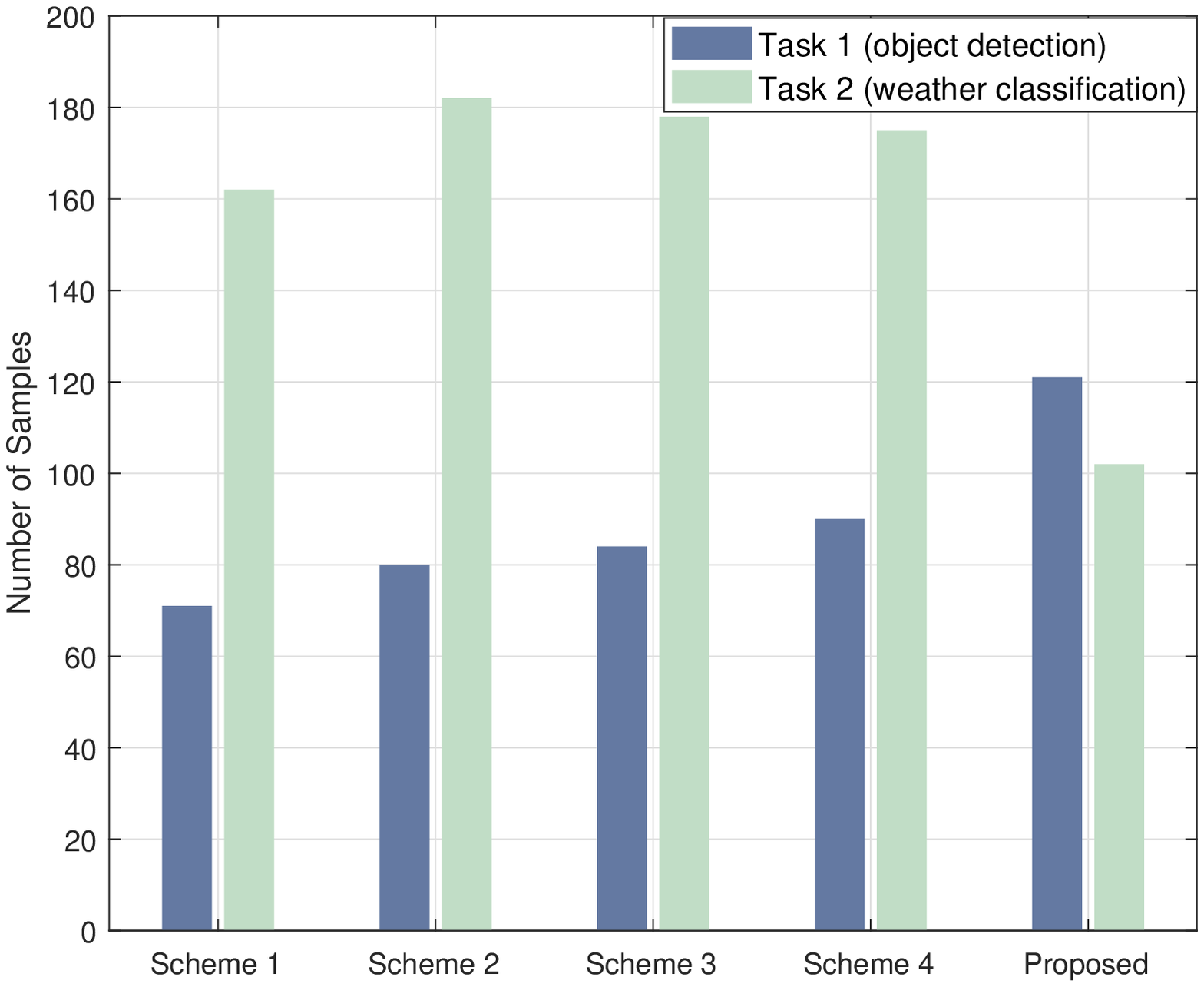}}
		\subfigure[]{
			\label{Fig8} 
			\includegraphics[height=1.34in]{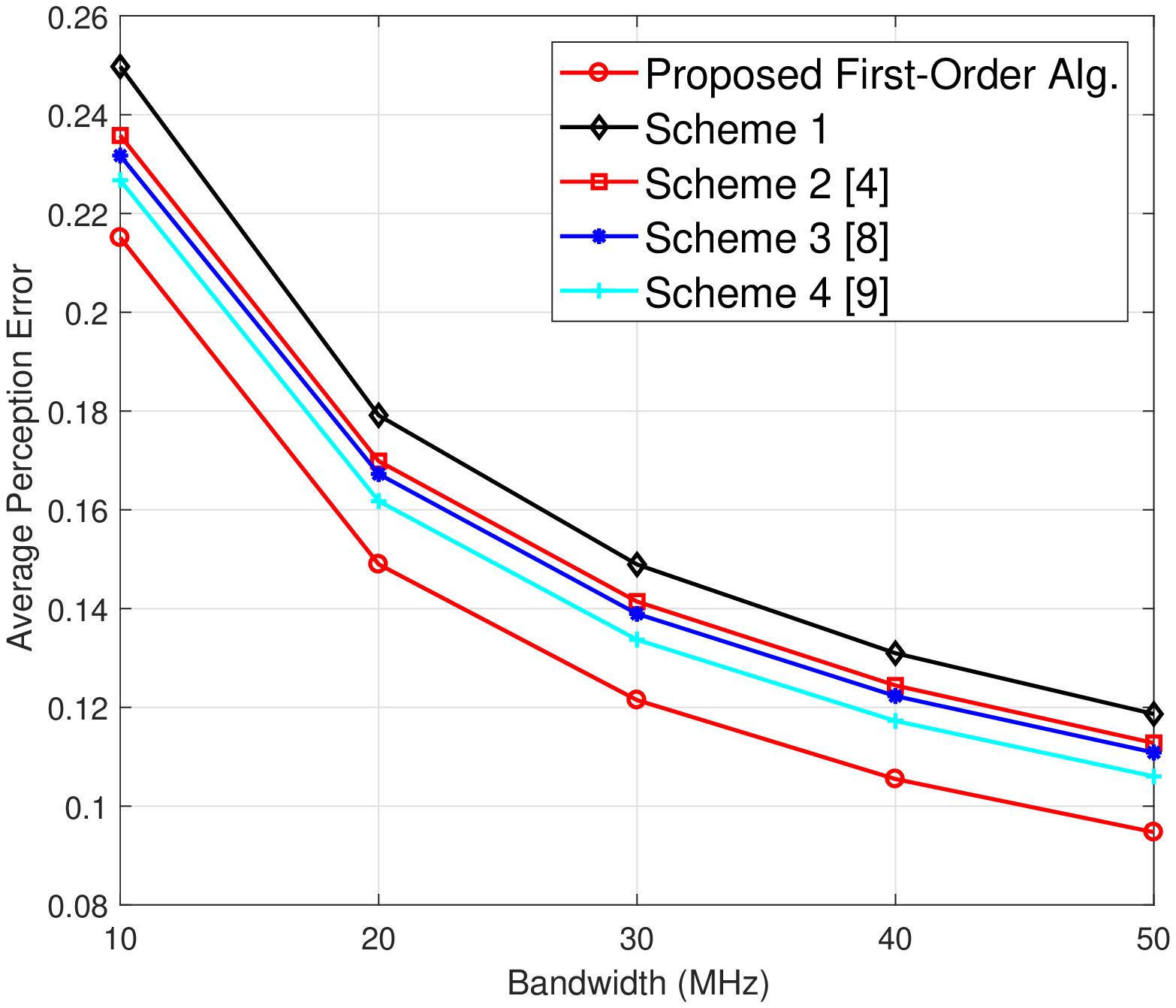}}
		\vspace{-0.3cm}
		\caption{(a) The objective value versus the number of iterations; (b) Comparison of the computation time; (c) Comparison of the number of collected samples; (d) Average perception errors of both asks under different bandwidths.}
		\label{Simulation_2} 
		\vspace{-5pt}
	\end{figure*}
	
	\begin{figure*}[t]
		\centering
		\subfigure[]{
			\label{Fig9}
			\includegraphics[width=0.49\textwidth]{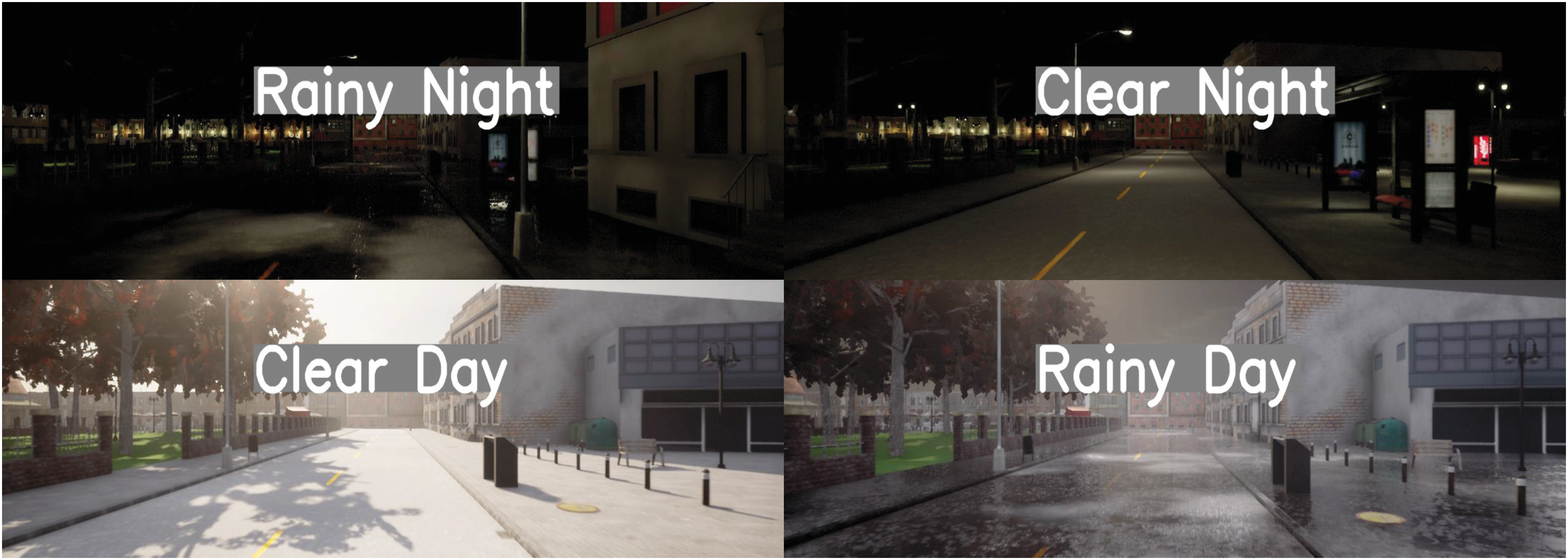}}
		\subfigure[]{
			\label{Fig10}
			\includegraphics[width=0.49\textwidth]{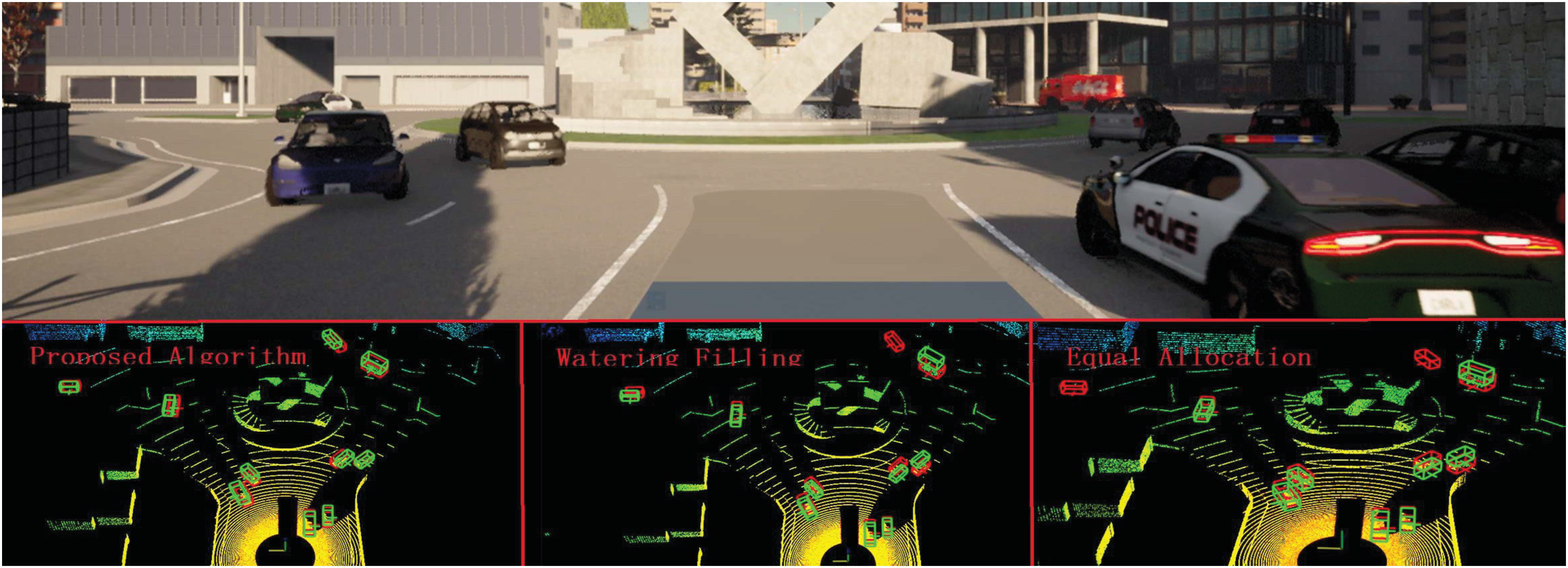}}
		\vspace{-0.3cm}
		\caption{(a) The weather detection task implemented in CARLA; (b) The object detection task implemented in CARLA. The red box represents the ground truth, and the green box represents the detected result.}
		\label{Simulation_3} 
		\vspace{-15pt}
	\end{figure*}
	
	The simulations were done on the CARLA simulator \cite{32,flcav} with $K=2$, where
	the first CAV with a 64-line LiDAR on top of the car is a point-cloud data collector, while the second CAV with an RGB camera in front of the car is an image data collector.
	The Adam optimizer is adopted for training SECOND and CNN.
	The communication parameters are given by $N=1000$, $B_\text{total} = 20$\,MHz, $P_1 = P_2 = 1$\,W, $P_{\text{total}}=2$\,W, $N_0=-110$\,dBm/Hz.
	We simulate $L = 10$ BSs and the CAV-BS distance is generated randomly from $5$\,m to $150$\,m.
	The channels are generated by using a distance-dependent path-loss model with $30$\,dB loss at a unit distance of $1$\,m.
	
	Fig.~\ref{Fig5} shows the convergence behaviour of the proposed algorithm.
	It can be seen that the algorithm converges very fast within 10 iterations.
	Fig.~\ref{Fig6} compares the computation time of the proposed algorithm and CVX on a desktop with I7-7700 3.6GHz CPU and 64G RAM.
	The proposed algorithm significantly reduces computation time.
	Notably, when $N=1000$, the proposed first-order algorithm reduces the computation time of CVX (i.e., interior-point method) by $98.2\%$, and the gain  increases with the number of time slot $N$.
	
	Next, we compare the proposed algorithm with benchmark schemes as follows:
	1) \textbf{Scheme 1}: equally allocating bandwidth and power across all CAVs;
	2) \textbf{Scheme 2}: maximizing the total communication throughput via water-filling \cite{48};
	3) \textbf{Scheme 3}: QoT-oriented power optimization with equal bandwidth allocation \cite{10};
	4) \textbf{Scheme 4}: QoT-oriented bandwidth and power allocation ignoring time-varying channels \cite{47}.
	Figs. \ref{Fig7} and \ref{Fig8} compare the number of uploaded samples and the average perception error of the proposed algorithm with those of the benchmark schemes, respectively.
	It can be seen from Fig.~\ref{Fig7} that the proposed algorithm leads to a more balanced sample allocation between tasks 1 and 2.
	This is because the proposed algorithm simultaneously exploits the properties of multi-modal datasets and time-varying channels.
	As such, the proposed algorithm achieves a significantly smaller perception error, i.e., a higher QoT, than those of other benchmarks as shown in Fig.~\ref{Fig8}.
	
	Compared with Scheme 1, the proposed scheme reduces the perception error by $3\%$, which implies that resource allocation is crucial to the EIAD systems.
	Moreover, Scheme 2 leads to the second-worst performance among all the simulated schemes, meaning that the objective function have a more significant impact on EIAD than other factors such as the choice of design variables and the input channels.
	Finally, by comparing the proposed method with Schemes 4 and 3, we find that ignoring the time-varying feature of wireless channels would degrade the system performance inevitably.
	
	The simulation results of Figs. \ref{Fig7} and \ref{Fig8} are further visualized in Fig. \ref{Fig9} and Fig. \ref{Fig10}.
	In particular, no matter which algorithm is chosen, the trained CNNs always distinguish different weathers.
	This is because task 2 has a fast learning progress, and tens of images are enough for realizing accurate perception.
	On the other hand, the SECOND trained with the proposed algorithm successfully detects objects on the road.
	In contrast, other schemes yield missing or inaccurate detection results due to the insufficient number of point clouds.
	This is because the proposed algorithm automatically allocates more resources to task 1, which has a more significant learning curve, for QoT maximization.
	
	\vspace{-6pt}
	\section{Conclusion}
	
	This article has studied the large-scale bandwidth and power allocation problem in EIAD.
	A first-order accelerated algorithm with linear complexity has been proposed.
	The proposed algorithm achieved a smaller perception error than the state-of-the-art, and a lower complexity than interior point method.

		\begin{appendices}

		\section{Projection of $\Pi_{\mathcal{S}}(\mathbf{X})$}
		$\Pi_{\mathcal{S}}(\mathbf{X})$ is a process of projecting a given point $\mathbf{X} =[\mathbf{x}_1,\ldots,\mathbf{x}_N] (\mathbf{x}_n\in R^{K\times 1}, n=1,\ldots,N)$
		onto set $\mathcal{S}$.
		This problem is equivalent to
		\begin{eqnarray}
		&& \mathcal{O}:\min_{\mathbf{U}} ||\mathbf{X}-\mathbf{U}||_2^2=\sum_{n=1}^N ||\mathbf{x_n}-\mathbf{u}_n||_2^2  \nonumber \\
		&& \mathrm{s.t} \quad \mathbf{1}^T \mathbf{u}_n = B_\text{total}, ~ \mathbf{u}_{n} \succeq \textbf{0}, \quad \forall n. \nonumber
		\end{eqnarray}
		Since the variable $\mathbf{U}$ can
		be partitioned into subvectors $\mathbf{u}_1, \cdots, \mathbf{u}_N$, the objective is a sum of functions of $\mathbf{u}_n$, $n=1,\cdots,N$,
		and each constraint involves only variables from one of the subvectors $\mathbf{u}_n$.
		Then, we can solve each problem involving $\mathbf{u}_n$ separately, and re-assemble the solution $\mathbf{U}$ as $[\mathbf{u}_1,\ldots,\mathbf{u}_N]$.
		The $n$-th sub-projection problem is written as
		\begin{eqnarray}
		\mathcal{O}^{(n)}:\min_{\mathbf{u}_n} ||\mathbf{x}_n-\mathbf{u}_n||_2^2 \quad \mathrm{s.t} \quad \mathbf{1}^T \mathbf{u}_n = B_\text{total},\quad \mathbf{u}_n\succeq \textbf{0}. \nonumber
		\end{eqnarray}
		The solution $\mathbf{u}_n$ to problem  $\mathcal{O}^{(n)}$ can be calculated according to \cite[Proposition~2.2]{23}:
		\begin{align}
		&
		\mathbf{u}_n=
		\left[\mathbf{x}_n-\frac{\mathop{\sum}_{l=1}^\delta z_l-B_\text{total}}{\delta}\right]^+, \label{proj}
		\end{align}
		where $\mathbf{z}$ is a permutation of $\mathbf{x}_n$ such that $z_1\geq \cdots \geq z_N$, and
		\begin{align}
		\delta=\mathop{\mathrm{max}}_{m\in\{1,\cdots,N\}}~\left\{m:
		\frac{\sum_{l=1}^mz_l-B_\text{total}}{m}<z_m
		\right\}.
		\end{align}

		
		\section{Proof of Proposition 2}
		
		For solving Problem \eqref{AppA1}, $\sum_{n=1}^{N}p_{k,n}^{[i]}=t_k$ is relaxed to  $\frac{1}{N} \sum_{n=1}^{N}p_{k,n}^{[i]}\leq t_k$, which does not change the problem.
		This is because the objective function is monotonically decreasing in $p_{k,n}^{[i]}$ and a larger  $\frac{1}{N} \sum_{n=1}^{N}p_{k,n}^{[i]}$ always reduces the objective function.
		Therefore, optimal $\mathbf{p}_k^*,t_k^*$ to Problem \eqref{AppA1} always activate   $\frac{1}{N} \sum_{n=1}^{N}p_{k,n}^{[i]}\leq t_k$.
		As such, Problem \eqref{AppA1} can be re-written as
		\begin{align}
		\min_{\{\mathbf{p}_k^{[i]},t_k\}}~&\frac{a_{k}}{K}\left(\sum_{n=1}^{N}\frac{Tu^\diamond_{k,n}}{ND_{k}} \log_{2} \left( 1+\frac{g_{k,n}p_{k,n}^{[i]}}{N_0u_{k,n}^\diamond} \right)\right)^{-b_{k}}  +\lambda^{[i]}t_k\nonumber\\
		\mathrm{s.t.}~~~&\eqref{C5}, \quad
		\frac{1}{N}\sum_{n=1}^{N}p_{k,n}^{[i]} \leq \mathrm{min}\left(P_{k},t_k\right). \label{AppA2}
		\end{align}
		For each fixed $t_k=t_k^\diamond$, the objective function is only related to $\mathbf{p}_k^{[i]}$.
		Furthermore, due to monotonic decreasing property of $a_kx^{-b_k}$, Problem \eqref{AppA2} can be equivalently transformed into
		\label{prob:equiv_p}
		\begin{align}
		\min_{p_{k,n}^{[i]}}\quad &-\sum_{n=1}^N \frac{Tu_{k,n}^{\diamond}}{ND_{k}}
		\log_2\left(1+\frac{p_{k,n}^{[i]}g_{k,n}}{N_0u_{k,n}^{\diamond}}\right)\nonumber\\
		\textrm{s.t.} \quad& \eqref{C5},\quad \sum_{n=1}^Np_{k,n}^{[i]}\leq N\mathrm{min}\left(P_{k},t_k\right),
		\end{align}
		It can be seen that Problem (18) is a convex problem. Thus, its optimal solution ${p_{k,n}^{[i],*}}$ can be obtained by KKT conditions.
		Particularly, according to the  stationarity condition, we have
		\begin{equation}
		-\frac{Tu_{k,n}^\diamond}{ND_k ln2}\frac{g_{k,n}/N_0u_{k,n}^\diamond}{1+{p_{k,n}^{[i],*}} g_{k,n}/N_0 u_{k,n}^\diamond}+\mu - \alpha_n = 0.
		\end{equation}
		The solution is either ${p_{k,n}^{[i],*}}=0$ or  ${p_{k,n}^{[i],*}} > 0 $. Moreover, if ${p_{k,n}^{[i],*}} \neq 0$,  according to the complementary slackness, $\alpha_n = 0$. Thus, the optimal ${p_{k,n}^{[i],*}}$ is given by
		\begin{equation}
		{p_{k,n}^{[i],*}}(\mu) = \left[ \frac{Tu_{k,n}^\diamond}{\mu ND_kln2}-\frac{N_0 u_{k,n}^\diamond}{g_{k,n}} \right]^+.
		\end{equation}
		Moreover, $\mu \neq 0$ (otherwise, ${p_{k,n}^{[i],*}}(\mu) \rightarrow +\infty $). Thus, according to the complementary slackness, it holds that $\sum_{n=1}^N {p_{k,n}^{[i],*}}(\mu)=N\min(P_k,t_k)$. This ends the proof.
	\end{appendices}
	
	\vspace{-7pt}
	\bibliographystyle{IEEEtran}
	\bibliography{perception}

\begin{thebibliography}{10}
\providecommand{\url}[1]{#1}
\csname url@samestyle\endcsname
\providecommand{\newblock}{\relax}
\providecommand{\bibinfo}[2]{#2}
\providecommand{\BIBentrySTDinterwordspacing}{\spaceskip=0pt\relax}
\providecommand{\BIBentryALTinterwordstretchfactor}{4}
\providecommand{\BIBentryALTinterwordspacing}{\spaceskip=\fontdimen2\font plus
\BIBentryALTinterwordstretchfactor\fontdimen3\font minus
  \fontdimen4\font\relax}
\providecommand{\BIBforeignlanguage}[2]{{%
\expandafter\ifx\csname l@#1\endcsname\relax
\typeout{** WARNING: IEEEtran.bst: No hyphenation pattern has been}%
\typeout{** loaded for the language `#1'. Using the pattern for}%
\typeout{** the default language instead.}%
\else
\language=\csname l@#1\endcsname
\fi
#2}}
\providecommand{\BIBdecl}{\relax}
\BIBdecl

\bibitem{zhangjun}
J.~Zhang and K.~B. Latief, ``Mobile edge intelligence and computing for the
  internet of vehicles,'' \emph{Proc. IEEE}, vol. 108, no.~2, pp. 246--261,
  2020.

\bibitem{tits}
D.~Feng and et. al., ``Deep multi-modal object detection and semantic
  segmentation for autonomous driving: Datasets, methods, and challenges,''
  \emph{IEEE Trans. Intell. Transp. Syst.}, vol.~22, no.~3, pp. 1341--1360,
  2021.

\bibitem{44}
H.~Xiao, J.~Zhao, Q.~Pei, J.~Feng, L.~Liu, and W.~Shi, ``Vehicle selection and
  resource optimization for federated learning in vehicular edge computing,''
  \emph{IEEE Trans. Intell. Transp. Syst.}, pp. 1--15, 2021.

\bibitem{48}
X.~Pei, H.~Yu, Y.~Chen, M.~Wen, and G.~Chen, ``Hybrid multicast/unicast design
  in {NOMA}-based vehicular caching system,'' \emph{IEEE Trans. Veh. Technol.},
  vol.~69, no.~12, pp. 16\,304--16\,308, 2020.

\bibitem{49}
S.~Yu, X.~Gong, Q.~Shi, X.~Wang, and X.~Chen, ``{EC-SAGINs}:
  Edge-computing-enhanced space-air-ground-integrated networks for internet of
  vehicles,'' \emph{IEEE Internet Things J.}, vol.~9, no.~8, pp. 5742--5754,
  2022.

\bibitem{51}
B.~Liu, L.~Wang, M.~Liu, and C.-Z. Xu, ``Federated imitation learning: A novel
  framework for cloud robotic systems with heterogeneous sensor data,''
  \emph{IEEE E Robot. Autom. Lett.}, vol.~5, no.~2, pp. 3509--3516, 2020.

\bibitem{32}
A.~Dosovitskiy, G.~Ros, F.~Codevilla, A.~Lopez, and V.~Koltun, ``{CARLA}: {An}
  open urban driving simulator,'' in \emph{Proceedings of the 1st Annual
  Conference on Robot Learning}, 2017, pp. 1--16.

\bibitem{10}
S.~Wang, Y.-C. Wu, M.~Xia, R.~Wang, and H.~V. Poor, ``Machine intelligence at
  the edge with learning centric power allocation,'' \emph{IEEE Trans. Wireless
  Commun.}, vol.~19, no.~11, pp. 7293--7308, 2020.

\bibitem{47}
L.~Zhou, Y.~Hong, S.~Wang, R.~Han, D.~Li, R.~Wang, and Q.~Hao, ``Learning
  centric wireless resource allocation for edge computing: Algorithm and
  experiment,'' \emph{IEEE Trans. Veh. Technol.}, vol.~70, no.~1, pp.
  1035--1040, 2021.

\bibitem{15}
T.~Zhang, S.~Wang, G.~Li, F.~Liu, G.~Zhu, and R.~Wang, ``Accelerating edge
  intelligence via integrated sensing and communication,'' \emph{Proc. IEEE
  ICC}, pp. 1--7, 2022.

\bibitem{43}
X.~Li, S.~Wang, G.~Zhu, Z.~Zhou, K.~Huang, and Y.~Gong, ``Data partition and
  rate control for learning and energy efficient edge intelligence,''
  \emph{arXiv preprint arXiv:2107.08884}, 2021.

\bibitem{50}
C.~White, A.~Zela, R.~Ru, Y.~Liu, and F.~Hutter, ``How powerful are performance
  predictors in neural architecture search?'' \emph{Advances in Neural
  Information Processing Systems}, vol.~34, 2021.

\bibitem{52}
L.~Grippo and M.~Sciandrone, ``On the convergence of the block nonlinear
  gauss--seidel method under convex constraints,'' \emph{Operations research
  letters}, vol.~26, no.~3, pp. 127--136, 2000.

\bibitem{21}
W.~Su, S.~Boyd, and E.~Candes, ``A differential equation for modeling
  nesterov’s accelerated gradient method: Theory and insights,''
  \emph{Advances in neural information processing systems}, vol.~27, pp.
  2510--2518, 2014.

\bibitem{flcav}
S.~Wang, C.~Li, D.~W.~K. Ng, Y.~C. Eldar, H.~V. Poor, Q.~Hao, and C.~Xu,
  ``Federated deep learning meets autonomous vehicle perception: Design and
  verification,'' \emph{IEEE Network}, 2022.

\bibitem{23}
L.~Condat, ``Fast projection onto the simplex and the {$l_1$} ball,''
  \emph{Mathematical Programming}, vol. 158, no.~1, pp. 575--585, 2016.

\end{thebibliography}
	
\end{document}